\newif\ifFull
\renewcommand{\emph}[1]{\textit{\textbf{#1}}}
\newtheorem{theorem}{Theorem}
\newtheorem{lemma}[theorem]{Lemma}
\newenvironment{proof}{\noindent{\bf Proof:}}{\hspace*{\fill}\rule{6pt}{6pt}\bigskip}
\begin{document}


\ifFull\else
\CopyrightYear{2016} 
\setcopyright{acmlicensed}
\conferenceinfo{SPAA'16,}{July 11--13, 2016, Pacific Grove, CA, USA.}
\isbn{978-1-4503-4210-0/16/07}\acmPrice{\$15.00}
\doi{http://dx.doi.org/10.1145/2935764.2935779} 
\fi

\clubpenalty=10000 
\widowpenalty = 10000
\title{Parallel Algorithms for Summing Floating-Point Numbers}

\author{Michael T.~Goodrich \\[5pt]
Dept. of Computer Science \\
University of California, Irvine \\
Irvine, CA 92697 USA \\
\texttt{goodrich@acm.org}
\and
Ahmed Eldawy \\[5pt]
Dept. of Computer Science and Engineering \\
University of California, Riverside \\
Riverside, CA 92521 USA \\
\texttt{eldawy@cs.ucr.edu}
}

\date{}

\maketitle

\begin{abstract}
The problem of {\em exactly} summing $n$ floating-point numbers
is a fundamental problem that has many applications in large-scale
simulations and computational geometry. Unfortunately, due to the
round-off error in standard floating-point operations, this problem
becomes very challenging. Moreover, all existing solutions rely on
sequential algorithms which cannot scale to the huge datasets that
need to be processed.

In this paper, we provide several efficient {\em parallel} algorithms
for summing $n$ floating point numbers, so as to produce
a faithfully rounded floating-point representation of the sum.
We present algorithms in PRAM, external-memory, and MapReduce models,
and we also provide an experimental analysis of our MapReduce algorithms,
due to their simplicity and practical efficiency.
\end{abstract}

\section{Introduction}
Floating-point numbers are a common data type for representing values
that can vary widely.
A (base-2) floating-point
representation\footnote{We take the viewpoint in this paper
   that floating-point numbers are a base-2 representation;
   nevertheless, our algorithms can easily be modified to work with
   other standard floating-point bases, such as 10.}
of a number, $x$, is a sign bit, $b$, and pair of
integers, $(b,M,E)$, such that
\[
x = (-1)^b \times (1+ 2^{-t}M) \times 2^{E-2^{l-1}-1},
\]
where $t$ is the number of bits allocated for $M$ and $l$ is the
number of bits allocated for $E$.
The value, $M$, is called the 
\emph{mantissa} or \emph{significand}
and the value $E$ is
called the \emph{exponent}.
For more background information on floating-point numbers,
please see, e.g., the excellent survey by
Goldberg~\cite{Goldberg:1991:CSK:103162.103163}.

In \emph{fixed-precision} representations, specific 
values for $t$ and $l$ are set
according to machine-architecture or established standards.
For example, in the IEEE 754 standard, single-precision
floating-point numbers use $t=23$ and $l=8$, double-precision
floating-point numbers use $t=52$ and $l=11$,
and quad-precision floating-point numbers use $t=112$ and $l=15$.

In \emph{arbitrary-precision} representations, 
the number of bits, $t$, for the mantissa, is allowed to vary, either 
to be as large as a machine's memory size or to be 
an arbitrary value set by a user. 
The number, $l$, of bits for the exponent in such
representations is typically set to be a single memory word, since a single
memory word is usually sufficient to store any memory address.
Examples of systems for processing
arbitrary-precision floating point representations
include Apfloat for Java~\cite{apfloat}, the GNU Multiple Precision (GMP)
Arithmatic Library~\cite{GMP},
the \texttt{bigfloat} type in LEDA~\cite{leda},
and
the GNU Multiple-Precision Floating-Point (MPFR) Library~\cite{mpfr,mpfr07}.

In either fixed-precision or arbitrary-precision representations, we do 
\emph{not} assume in this paper 
that floating-point numbers are normalized so that the 
most significant bit of the mantissa is $1$.

Given a set $X=\{x_1,x_2,\ldots,x_n\}$, of $n$ floating-point
numbers,
we are interested in the design and analysis of 
parallel algorithms for computing a floating point number, $S_n^*$,
that best approximates the sum
\[
S_n = \sum_{i=1}^n x_i.
\]
Moreover, we are interested in methods that are not limited
to a specific fixed-precision representation, such as IEEE 754 
double-precision.
In particular,
the specific problem we are interested in is to
compute the floating-point number $S_n^*$ that is a 
\emph{faithfully rounded}
representation of $S_n$, where
we consider the value $S_n^*$ to be faithfully rounded as long as
$S_n^*$ is either the largest floating-point number less than or
equal to $S_n$ or the smallest floating-point number greater than or
equal to $S_n$.
For other ways of rounding floating-point numbers,
please see the IEEE 754 standard.

Although the problem of accurately summing $n$ floating-point numbers
might at first seem simple, it is
surprisingly challenging, due to the roundoff error inherent
in floating-point arithmetic.
Standard floating-point addition, which we denote as ``$\oplus$,'' is
not exact, so that for two floating-point numbers, $x$ and $y$,
\[
x \oplus y = (x+y)(1+\epsilon_{x,y}); |\epsilon_{x,y}| \le |\epsilon|,
\]
where $\epsilon_{x,y}$ is a summand-specific roundoff error
and $\epsilon$ is a machine-specific worst-case 
error term~\cite{Knuth:1997:ACP:270146,doi:10.1137/S0097539798341594}.

As is well-known,
a standard way to sum $n$ numbers, given the set $X$,
is to place the numbers from $X$ in the
leaves of a binary summation tree, $T$, where each internal node, $v$, is
associated with the sum of the values in $v$'s two children,
which in this case could be the $\oplus$ operation.
Unfortunately, if the collection, $X$, of summands contains positive and
negative numbers, it is NP-complete to find the summation tree, $T$,
that utilizes only the $\oplus$ operation
and minimizes the worst-case error for the
sum~\cite{doi:10.1137/S0097539798341594}.
Even for the case when
all the numbers in $X$ are positive, the optimal $\oplus$-based summation
tree is a Huffman tree~\cite{doi:10.1137/S0097539798341594}, 
which is not necessarily efficient to compute
and evaluate in parallel.
Thus, designing efficient parallel algorithms for accurately summing $n$
floating-point numbers is a challenge.

Because of these complexity issues and the uncertainty that comes
from roundoff errors, many recent 
floating-point summation algorithms are based 
on exactly computing a rounded
sum of two floating point numbers, $x$ and $y$,
and the exact value of its error, utilizing a function,
\newcommand{\add}{{\rm AddTwo}}
\[
\add(x,y) \rightarrow (s,e_s),
\]
where $s = x\oplus y$ and $x+y = s+e_s$, with $s$ and $e_s$ being
floating point numbers in the same precision as $x$ and $y$.
Example implementations of the $\add$ function include algorithms
by Dekker~\cite{dekker}
and Knuth~\cite{Knuth:1997:ACP:270146},
both of which utilize a constant number of floating-point operations.
As with these recent approaches, in this paper we take the approach
of summing the numbers in $X$ exactly and then 
faithfully rounding this exact sum to the an appropriate
floating-point representation.

As mentioned above,
because we desire precision-independent algorithms, which can work
for either fixed-precision or arbitrary-precision representations, 
we do not take the perspective in this paper that 
floating-point numbers must 
be restricted to a specific floating-point representation, such as
IEEE 754 double-precision.
As is common in other precision-independent algorithms, however, we 
do assume that a floating-point representation is sufficiently precise so
that the number, $n$, could itself be represented exactly as a sum of 
a constant number of floating-point numbers.
A similar assumption is common in integer-based RAM and PRAM 
algorithms,\footnote{A PRAM is a synchronized 
  parallel random-access machine model~\cite{DBLP:books/aw/JaJa92},
  where memory is shared between processors, so that memory accesses are
  exclusive-read/exclusive-write (EREW), concurrent-read/exclusive-write 
  (CREW), or concurrent-read/concurrent-write (CRCW).}
and poses no restriction in practice, since it amounts to saying that
a memory address can be stored in $O(1)$ memory words.
For example, even if all the computer storage on 
earth\footnote{As of the writing of this paper, the total computer storage on
  earth is estimated to be less than $8$ zettabytes, that is, 
  $8$ trillion gigabytes.}
were interpreted
as IEEE 754 single-precision numbers and summed, we could represent 
the input size, $n$, as the (exact) 
sum of at most four single-precision floating-point numbers.

As a way of characterizing difficult problem instances, having the
potential for significant amounts of cancellation, the
\emph{condition number}, $C(X)$, for $X$, is 
defined
as follows~\cite{doi:10.1137/070710020,Zhu:2010:A9O,doi:10.1137/050645671}:
\[
C(X) = \frac{\sum_{i=1}^n |x_i|}{\left|\sum_{i=1}^n x_i \right|},
\]
which, of course, is defined only for non-zero sums.\footnote{We 
could alternatively define a condition number that 
adds to the denominator a very small value 
to avoid division by zero.}
Intuitively, problem instances with large condition numbers are more
difficult than problem instances with condition numbers close to 1.

Our approach for designing efficient parallel algorithms for summing
$n$ floating point numbers, even for difficult problem instances
with large condition numbers, is to utilize an approach 
somewhat reminiscent of the classic Fast Fourier Transform
(FFT) algorithm (e.g., see~\cite{Brigham,GoodrichADA}).
Namely, to compute the sum of $n$ floating-point 
numbers, we convert the numbers to an 
alternative representation, compute the sum of the numbers exactly in
this alternative representation, and then convert the result back to
a (faithfully-rounded) floating-point number.
In our case, the important feature of our alternative representation
is that it allows us to compute intermediate sums without 
propagating carry bits, which provides for superior parallelization.

\subsection{Previous Related Results}
The floating-point summation problem 
poses interesting challenges for parallel computation, because most existing
published exact-summation algorithms are inherently sequential.
For instance, we are not aware of any previous parallel methods
for this problem
that run in worst-case polylogarithmic time.

Neal~\cite{Neal15a} describes sequential algorithms that use a number
representation that he calls a \emph{superaccumulator} to exactly
sum $n$ floating point numbers, which he then converts 
to a faithfully-rounded floating-point number.
Unfortunately, while Neal's superaccumulator representation
reduces carry-bit propagation, it does not eliminate it, as is needed 
for highly-parallel algorithms.
A similar idea has been used in ExBLAS~\cite{CDG+14}, an open source
library for exact floating point computations.
Shewchuck~\cite{shewchuk} describes an alternative
representation for exactly representing intermediate results of
floating-point arithmetic, but his method also does not eliminate
carry-bit propagation in summations; hence, it also does not lead to
efficient parallel algorithms.
In addition to these methods, there are a number of inherently
sequential methods for exactly summing $n$ floating point numbers
using various other data structures for representing intermediate results,
including 
ExBLAS~\cite{CDG+14} and algorithms by 
Zhu and Hayes~\cite{doi:10.1137/070710020,Zhu:2010:A9O},
Demmel and Hida~\cite{doi:10.1137/S1064827502407627,demmel},
Rump {\it et al.}~\cite{doi:10.1137/050645671},
Priest~\cite{p145549},
and Malcolm~\cite{Malcolm:1971}.
Although the method of Demmel and Hida~\cite{demmel} can be
carry-free for a limited number of summands,
none of these sequential methods utilize a completely carry-free intermediate
representation suitable for efficient parallelization.
Nevertheless, Rump {\it et al.}~\cite{doi:10.1137/050645671} provide
an interesting analysis that the running
time of their sequential method 
can be estimated to depend on $n$ and a logarithm 
of the condition number and other factors.
Also, Demmel and Hida~\cite{demmel} showed that summing the numbers
in a decreasing order by exponent yields a highly accurate answer,
yet, the answer does not have to be faithfully-rounded.

\sloppy
There has also been a limited amount of
previous work on parallel algorithms for summing floating-point numbers.
Leuprecht and Oberaigner~\cite{par82}
describe parallel algorithms for summing
floating-point numbers, for instance, but
their methods only parallelize data pipelining, however, and
do not translate into efficient parallel methods with polylogarithmic 
running times or efficient algorithms in the 
external-memory\footnote{The external-memory model~\cite{Vitter:2008} 
 is a data-parallel model, where data is transferred between an internal
 memory of a specified size, $M$, and external memory in blocks of a 
 specified size, $B$.}
or 
MapReduce models.\footnote{A MapReduce algorithm is a coarse-grain
  parallel method that performs rounds that involve mapping elements to 
  keys, assigning elements by their keys to ``reducer'' processors,
  and having those processors execute on sets of elements with 
  the same key, possibly with a global sequential finalize 
  phase at the end~\cite{Dean:2008,mapsort,Karloff:2010}.}
Indeed, their methods can involve as many as $O(n)$ passes over the data.
Kadric {\it et al.}~\cite{KGD6545903} provide a parallel pipelined method 
that takes a similar approach to the algorithm of Leuprecht and Oberaigner,
while improving its convergence in practice, but 
their method nevertheless depends on
inherently sequential pipelining and iterative refinement operations.
Recently,
Demmel and Nguyen~\cite{DN6875899} present a parallel floating-point summation
method based on using a superaccumulator, 
but, like the previous sequential superaccumulator methods
cited above, their method does not
utilize a carry-free intermediate representation; hence, it has an inherently
sequential carry-propagation step as a part of its ``inner loop'' computation.

\fussy

To our knowledge, no previous algorithm for summing
floating-point numbers utilizes a carry-free intermediate representation,
although such representations are known 
for integer arithmetic (e.g., see~\cite{P46283}),
the most notable being the redundant binary representation (RBR), 
which is a positional binary representation where each position is from
the set $\{-1,0,1\}$.

\subsection{Our Results}

We show in this paper that
one can compute a faithfully-rounded sum, $S_n^*$, of $n$ 
floating-point numbers with the following performance bounds.

\begin{itemize}
\item
$S^*_n$ can be computed
in $O(\log n)$ time using $n$ processors in the EREW PRAM model.
This is the first such PRAM algorithm and, as we show, it is worst-case 
optimal.
\item
$S^*_n$ can be computed 
in $O(\log^2 n \log\log\log C(X))$ time using $O(n\log C(X))$ work 
in the EREW PRAM model, where $C(X)$ is the condition number of $X$.
This is the first parallel summation algorithm 
whose running time is condition-number sensitive.
\item
$S^*_n$ can be computed in external-memory
in $O({\rm sort}(n))$ I/Os, where ``sort$(n)$'' is the 
I/O complexity of sorting.\footnote{${\rm sort}(n)=O((n/B)\log_{M/B} (n/B))$,
  where $M$ is the size of internal memory and $B$ is the block size.}
\item
$S^*_n$ can be computed in external-memory
in $O({\rm scan}(n))$ I/Os, where ${\rm scan}(n)$ is 
the I/O complexity of scanning\footnote{${scan}(n)=O(n/B)$, where $B$ is 
   the block size.},
when the size of internal memory, $M$, is $\Omega(\sigma(n))$,
where $\sigma(n)$ is the size of our intermediate representation for 
the sum of $n$ floating-point numbers.
This I/O performance bound is, of course, optimal for this case.
Typically, $\sigma(n)$ is $O(\log n)$ in practice,
due to the concentration of exponent values in real-world data
(e.g., see~\cite{Isenburg2005869,LH230407}) 
or because fixed-precision floating-point
numbers have relatively small values for $l$, the number of bits used
for exponent values, compared to $n$.
\item
$S^*_n$ can be computed
with a single-round MapReduce algorithm in $O(\sigma(n/p)p+n/p)$ time
and $O(n)$ work, using $p$ processors, with high probability,
assuming $p$ is $o(n)$.
\end{itemize}
In addition, because of the simplicity of our MapReduce algorithm,
we provide an experimental analysis
of several variations of this algorithm.
We show that our MapReduce algorithm can achieve up-to 80X speedup
over the state-of-the-art sequential algorithm. It achieves a linear scalability
with both the input size and number of cores in the cluster.

\section{Our Number Representations}
To represent the exact summation of $n$ floating point numbers,
we utilize a variety of alternative number representations.
To begin,
recall that a (base-2) floating-point number, $x$, is represented as
a sign bit, $b$, and
a pair of integers, $(b,M,E)$, 
such that
\[
x = (-1)^b \times (1+ 2^{-t}M) \times 2^{E-2^{l-1}-1},
\]
where $t$ is the number of bits allocated for $M$ and $l$ is the 
number of bits allocated for $E$.
For example, for double-precision floating-point numbers in the IEEE 754
standard, $t=52$ and $l=11$.

Thus, for the problem of summing the $n$ floating-point numbers in $X$,
we could alternatively represent every floating point number, including
each partial sum, as a
fixed-point binary number consisting of a sign bit, 
$t+2^{l-1}+\lceil\log n\rceil$ 
bits to the left of the binary point, and $t+2^{l-1}$ bits to the right of
the binary point.
Using such a representation, 
we would have
no errors in summing the numbers in $X$, assuming
there are no unnecessary overflows.
In other words,
we can consider such a fixed-point representation
as a large binary integer
representation, which then has its binary point shifted.
Of course, such a representation would potentially waste a lot of 
memory,
but it is nevertheless instructive to contemplate this as a possibility,
as it motivates our other representations.

Although this fixed-point representation is potentially wasteful of memory, 
it might not actually be so bad for exactly 
representing partial sums of a reasonable numbers of summands of 
single-precision numbers.
For example, in the IEEE 754 standard, single-precision 32-bit
floating-point numbers can be represented as 256-bit fixed-point 
numbers.
Thus, 
the memory needed for an error-free fixed-point representation of
a single-precision floating-point number would
occupy the same amount of space as 8 single-precision numbers.
Nevertheless,
an additional drawback of this representation is that, in the
worst-case, there can be a lot of carry-bit propagations that occur for any
addition, which negatively impacts parallel performance.

In a \emph{superaccumulator} 
representation, as used in some previous summation 
approaches
(e.g., see~\cite{doi:10.1137/1028001,Neal15a,p145549,shewchuk,Zhu:2010:A9O}),
we instead represent a fixed-point (or floating-point) 
number as a vector, $Y$, of 
floating-point numbers, $(y_k,y_{k-1},\ldots,y_0)$, 
such that the number, $y$, represented by $Y$ is
\[
y = \sum_{i=0}^k y_i,
\]
where the numbers in $Y$ have strictly increasing exponents, 
so that $y_0$ is the
least significant number (and this summation is a true addition, with no
roundoff errors).

As mentioned above,
there are some interesting recent floating-point exact summation papers based
on the use of superaccumulators, but these are of limited value with respect
to parallel algorithms.
For instance, Zhu and Hayes~\cite{Zhu:2010:A9O}
present an inherently sequential algorithm that essentially involves 
adding the numbers from $X$ one at time to an initially-zero
superaccumulator and then propagating the carries to produce a
faithfully-rounded sum from the most-significant entry.
Neal~\cite{Neal15a} presents a similar algorithm and also
discusses in passing an alternative approach based on a bottom-up
parallel summation tree computation based on superaccumulators, 
but his parallel algorithm involves 
inefficiently constructing superaccumulators for every leaf, and his
superaccumulators overlap by half their bits and even then are not
carry-free.

In our case, 
we allow each $y_i$ to be positive or negative and we 
add some additional conditions so that 
superaccumulator summations are carry-free.
We do this by extending the generalized signed digit (GSD) 
integer representation~\cite{P46283} to floating-point numbers.
This is a redundant representation, so that there are multiple
ways of representing the same fixed-point number.

To simply our discussion, let us shift the binary point for $y$ so that every
$y$ is an integer instead of a fixed-point number.
This shift is made without loss of generality, since we can shift
the binary point back to its proper place after we can computed an
exact representation of the sum of numbers in $X$.

Next, following the GSD approach~\cite{P46283},
we say that a superaccumulator is \emph{$(\alpha,\beta)$-regularized} if
\[
y_i = Y_i \times R^i,
\]
for a given \emph{radix}, $R$,
and each mantissa $Y_i$ is an integer in the range $[-\alpha,\beta]$,
for $\alpha,\beta\ge 2$.
In particular, for our algorithms, for any fixed $t$,
we choose $R$ to be a power of two, $2^{t-1}>2$, 
so that each $y_i$ can be represented using a
floating-point exponent storing a multiple of $t-1$ 
(since we assume floating-point 
representations are base-2 in this paper).
For arbitrary-precision floating-point numbers, we likewise choose 
$R=2^{t_0-1}>2$,
for a fixed value $t_0\ge 2$ that is a reasonable length 
for the number of bits needed for a mantissa (proportional to our word size).
In either case, we set
\[
\alpha = \beta = R-1.
\]
This choice of the parameters, $\alpha$ and $\beta$, 
is done so that we can achieve
propagation-free carrying of components of a summation, as we show next.

Our parallel algorithm for summing two
superaccumulators, $y$ and $z$, is as follows.
First, we compute each component-wise sum of the mantissas, 
$P_i = Y_i+Z_i$. 
This sum is then reduced to
an interim mantissa sum, $W_i=P_i-C_{i+1}R$, where $C_{i+1}$ is a
\emph{signed carry bit}, i.e., $C_{i+1}\in\{-1,0,1\}$,
that is chosen
so that $W_i$ is guaranteed to be in the range $[-(\alpha-1),\,\beta-1]$.
(We show below that this is always possible.)
The computed mantissa sum is 
then performed as
$S_i = W_i + C_i$, so that the resulting collection of $s_i$ components
is $(\alpha,\beta)$-regularized and no carry-bit propagation is necessary.
As the following lemma shows,
taking this approach allows us to avoid propagating carries across
an entire superaccumulator after each addition in a summation computation,
while nevertheless representing each partial sum exactly.

\begin{lemma}
It is sufficient to choose 
$ \alpha=\beta=R-1$,
for $R>2$,
for the resulting sum of $y$ and $z$ 
to be $(\alpha,\beta)$-regularized, i.e., so that each
$S_i$ is in the range $[-\alpha,\beta]$.  
\end{lemma}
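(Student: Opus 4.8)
The plan is to verify the claim by straightforward interval arithmetic on the three-stage reduction $P_i \to W_i \to S_i$, tracking how the ranges propagate and checking that a single signed carry bit suffices at each stage. Throughout I assume both input superaccumulators are already $(\alpha,\beta)$-regularized with $\alpha=\beta=R-1$, so that every mantissa $Y_i$ and $Z_i$ is an integer in $[-(R-1),\,R-1]$.

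First I would bound the componentwise sum. Since $Y_i,Z_i\in[-(R-1),\,R-1]$, their sum satisfies $P_i=Y_i+Z_i\in[-2(R-1),\,2(R-1)]$. This is the only place where the full width of the input range enters; everything afterward is about squeezing this doubled range back down to the target.

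Next comes the crucial step: I would show that a signed carry bit $C_{i+1}\in\{-1,0,1\}$ can always be chosen so that $W_i=P_i-C_{i+1}R$ lands in the tightened interval $[-(\alpha-1),\,\beta-1]=[-(R-2),\,R-2]$. This is a three-case argument on where $P_i$ lies: if $|P_i|\le R-2$, take $C_{i+1}=0$; if $P_i\ge R-1$, take $C_{i+1}=1$, so that $W_i=P_i-R\in[-1,\,R-2]$; and symmetrically take $C_{i+1}=-1$ when $P_i\le-(R-1)$, giving $W_i=P_i+R\in[-(R-2),\,1]$. Since $P_i$ is an integer, these three sub-intervals for $P_i$, namely $[-(R-2),R-2]$, $[R-1,2(R-1)]$, and $[-2(R-1),-(R-1)]$, have no gaps and jointly cover $[-2(R-1),\,2(R-1)]$, so a valid choice always exists. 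The hypothesis $R>2$ is used in exactly one inequality: checking that the resulting endpoint $-1$ does not fall below $-(R-2)$ (and dually that $1\le R-2$), which holds precisely because $R-2\ge1$ for integer $R>2$. I expect this to be the main obstacle, in the sense that it is the only inequality that can fail if the radix is chosen too small; $R>2$ is what guarantees both the coverage of $P_i$ and the containment of each resulting $W_i$.

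Finally I would fold in the carry arriving from below. Because the reduction at the previous position emits the carry $C_i\in\{-1,0,1\}$, the output digit is $S_i=W_i+C_i$, and combining $W_i\in[-(R-2),\,R-2]$ with $|C_i|\le1$ yields $S_i\in[-(R-1),\,R-1]=[-\alpha,\beta]$. This is exactly $(\alpha,\beta)$-regularity of the sum. Moreover, since each $C_{i+1}$ is determined locally from $P_i$ alone, with no dependence on lower-order carries, no carry propagation across the superaccumulator is ever required, which is the property the construction was designed to achieve.
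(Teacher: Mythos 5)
Your proof is correct and follows essentially the same route as the paper's: bound $P_i\in[-2(R-1),2(R-1)]$, choose the signed carry $C_{i+1}\in\{-1,0,1\}$ by a three-case analysis so that $W_i=P_i-C_{i+1}R$ lands in $[-(\alpha-1),\beta-1]$, and then absorb the incoming carry $C_i$ to conclude $S_i\in[-\alpha,\beta]$. Your added remarks on interval coverage for integer $P_i$ and on the locality of the carry choice are consistent with (and slightly more explicit than) the paper's argument, but do not constitute a different approach.
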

\begin{proof}
Given that $y$ and $z$ are $(\alpha,\beta)$-regularized, the
mantissa, $P_i$, is in the range $[-2\alpha,2\beta]$.
We wish to show that the result, $s=y+z$, be $(\alpha,\beta)$-regularized,
that is, that each mantissa, $S_i$, is in the range
$[-\alpha,\beta]$.
Note that if $-R+1<P_i<R-1$, then we can ensure $S_i$ is
in the range $[-\alpha,\beta]$ for any $C_i$ in $\{-1,0,1\}$
by setting $C_{i+1}=0$.
So let us consider the cases when $P_i$ is too negative or too positive.

Case 1: $P_i\ge R-1$.
Note that
\[
P_i \le 2\beta = 2R-2.
\]
Hence, we can choose $C_{i+1}=1$, so that
\[
W_i = P_i - C_{i+1}R = P_i - R \le R-2 = \beta-1.
\]
Moreover, $W_i\ge -1 \ge -(\alpha-1)$, in this case, since $\alpha\ge 2$.
Thus, 
\[
S_i = W_i + C_i \le R-2+1 \le R-1 = \beta,
\]
and $S_i\ge -\alpha$.

Case 2: $P_i\le -R+1$.
Note that
\[
P_i \ge -2\alpha = -2R+2.
\]
Hence, we can choose $C_{i+1}=-1$, so that
\[
W_i = P_i - C_{i+1}R = P_i + R \ge -R+2.
\]
Moreover, $W_i\le 1 \le \beta-1$, in this case, since $\beta\ge 2$.
Thus,
\[
S_i = W_i + C_i \ge -R+2-1 \ge -(R-1) = -\alpha,
\]
and $S_i\le \beta$.
\end{proof}

This implies that, by choosing $\alpha=\beta=R-1$,
we can guarantee
that the sum of two $(\alpha,\beta)$-regularized
superaccumulators can be done
in parallel in constant time, with each carry going to at most an
adjacent component and no further.
There is a chance, of course that the sum of two superaccumulators
indexed from $0$ to $k$ could result in a superaccumulator
indexed from $0$ to $k+1$.
But this causes no problems for us, since any mantissa can hold
$\Omega(\log n)$ bits; hence, one additional suparaccumulator component
is sufficient for holding all the adjacent-component carries during the sum of 
$n$ floating-point numbers.

An additional observation is that, since $R$ is a power of two 
in our case, we
can compute each $W_i$ and $C_i$ using simple operations involving 
the addition or subtraction
of a power of two to a number,
given $P_i$. 
Also, since we reserve an extra bit in each superaccumulator,
computing $P_i$ can be done without overflow in the standard
floating-point representation of components.

Intuitively,
each pair of consecutive superaccumulator components, $y_i$ and $y_{i+1}$, 
``{overlap}'' by one bit and each component has one additional sign bit.
As an alternative to representing each $y_i$ as a floating-point
number, then, we could 
instead represent each number, $y_i$, using an integer representation, with an
explicit overlap of one bit between each consecutive pair of numbers,
$y_i$ and $y_{i+1}$, being understood.
This would allow us to save some space for what amounts to redundant
exponents in the floating-point representations of the $y_i$'s.
For the sake of simplicity, however, 
we choose in this discussion to assume that each
$y_i$ is itself a floating-point number, with the understanding that our
algorithms could be easily modified to work for the case that the numbers in
$Y$ are integers.
In either case,
the overlap between consecutive numbers in $Y$ 
allows us to apply a lazy strategy for accumulating carry bits, 
without full propagation,
which overcomes a shortcoming in previous representations.

One important comment is in order here.
Namely, for the analysis in this paper, we are \emph{not} assuming that 
the number, $l$, of bits allocated for floating-point exponents is a fixed
constant; hence, our analysis does not assume that, for our floating-point
numbers, the size of an equivalent fixed-point representation 
or superaccumulator for this number is a fixed constant.

Thus, for small numbers of 
arbitrary-precision floating-point numbers, it is possible that
our $(\alpha,\beta)$-regularized superaccumulators may waste space.
To avoid these issues,
we can represent numbers using a format we 
are calling a ``{sparse superaccumulator}.''
Given a superaccumulator,
\[
Y=(y_k,y_{k-1},\ldots,y_0),
\]
the \emph{sparse superaccumulator} for $Y$ is the vector,
\[
Y'=(y_{i_j},y_{i_{j-1}},\ldots,y_{i_0}),
\]
consisting of all the \emph{active} indices in $Y$, for $i_0<i_1<\cdots<i_j$.
We say that a index, $i$, in a superaccumulator
is \emph{active} if $y_i$ is currently non-zero or has 
ever been non-zero in the past (when viewing a superaccumulator as
an indexed data structure).

One possible parallel algorithm for
summing two sparse superaccumulators,
$Y'=(y_{i_{j_1}},\ldots,y_{i_{0}})$
and $Z'=(z_{i_{j_2}},\ldots,z_{i_{0}})$,
is as follows.
We first merge the active indices of $Y'$ and $Z'$, and we then do a
summation of corresponding terms (possibly with carries into adjacent
components, as needed).
Note, though, that we will not propagate carries, when we use
$(\alpha,\beta)$-regularized superaccumulators.
Thus, an index in the sum is active if it was active in $Y'$ or $Z'$ or
becomes non-zero as a result of the sum.
This definition is somewhat related to the adaptive
floating-point representation of Shewchuk~\cite{shewchuk},
which introduces sparsity and adaptivity but only for 
vectors of non-overlapping floating-point numbers having arbitrary
exponents, rather than exponents that are powers of the radix $R$,
as in our sparse $(\alpha,\beta)$-regularized superaccumulator representation.

Furthermore,
given a sparse superaccumulator, $Y'$, and a parameter $\gamma$, we define the 
\emph{$\gamma$-truncated sparse superaccumulator} for $Y'$
to be the vector, $Y''$,
consisting of the first (most-significant) $\gamma$ entries in $Y'$.

\section{Our Fast PRAM Algorithm}
\label{sec:pram}

The
first PRAM algorithm we present is based on 
summing numbers represented using sparse $(\alpha,\beta)$-regularized
superaccumulators.
Our method runs in $O(\log n)$ time using $O(n\log n)$ work in the EREW PRAM
model, which is worst-case optimal.
The details are as follows.
\begin{enumerate}
\item
Build a binary summation tree, $T$, with $\lceil\log n\rceil$ depth,
having each leaf $i$ associated with a distinct floating-point number,
$x_i$, in $X$.
This step can be done in $O(\log n)$ time and $O(n)$ work.
\item
In parallel, for each $x_i$, convert $x_i$
into an equivalent 
$(\alpha,\beta)$-regularized superaccumulator, $x_i'$.
This step can be done in $O(1)$ time and $O(n)$ work, just by
splitting each floating-point number into $O(1)$ numbers such that
each has an exponent that is a multiple of $R$.
\item
Perform a parallel merge-sort of the $x_i'$ components, using their
exponents as keys (not their mantissas).  This creates a sorted list,
$E(v)$, for each node in $T$, consisting of the exponents found in
the subtree in $T$ rooted at $v$, as well as links for each exponent,
$e$ in $E(v)$ to its predecessors in the lists for $v$'s children and
its parent.
This step can be done in $O(\log n)$ time using $O(n\log n)$
work~\cite{doi:10.1137/0217049,Goodrich:1996:SPP:226643.226670} 
via the cascading divide-and-conquer 
technique~\cite{doi:10.1137/0218035}, because the boundary exponents
are known from the beginning.
\item
Perform a parallel prefix computation to remove duplicate exponents
in each $E(v)$.
This step can be done in $O(\log n)$ time using $O(n\log n)$
work, since the total size of all the lists is $O(n\log n)$.
\item
Using the links created in the previous step to match up
corresponding components in constant-time per level, possibly adding
new components that represent a carry bit moving into a component
that was previously not included in the sparse representation,
perform a bottom-up sparse superaccumulator summation in $T$. This
results in a sparse superaccumulator representation of the sum being
stored in the root of $T$.
This step can be done in $O(\log n)$ time using $O(n\log n)$ work.
\item
Convert the sparse $(\alpha,\beta)$-regularized superaccumulator for
the sum at the root of $T$ into a non-overlapping
superaccumulator that is
$((R/2)-1,(R/2)-1)$-regularized.
This amounts to a signed carry-bit propagation operation, which can be
done by a parallel prefix computation (based on a simple lookup
table based on whether the input carry bit is a $-1$, $0$, or $1$).
We leave the details to the interested reader.
This step can be done in $O(\log n)$ time using $O(n)$ work.
\item
Correctly round the non-overlapping
superaccumulator from the previous step
into a floating-point number.
This step amounts to locating the most significant non-zero component
of the superaccumulator and then combining that, as needed, 
with its neighboring components to create a floating-point number of 
the appropriate size, rounding the result based on the truncated bits.
This step can be done in $O(\log n)$ time using $O(n)$ work.
\end{enumerate}

This gives us the following.

\begin{theorem}
Given $n$ floating-point numbers, one can compute a 
faithfully-rounded representation
of their sum in $O(\log n)$ time using $O(n\log n)$ work (i.e., $n$ processors)
in the EREW PRAM model.
These time and processor bounds are worst-case optimal 
in the algebraic computation tree model.
\end{theorem}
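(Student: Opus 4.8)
The plan is to prove the theorem in two independent pieces: the upper bound, which is essentially contained in the seven-step procedure stated above and only needs its correctness and resource bounds verified, and the lower bound, which needs one argument for the $O(\log n)$ time and a separate argument for the processor count.

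For the upper bound, I would first establish correctness by induction on the summation tree $T$. Step~2 converts each $x_i$ into an $(\alpha,\beta)$-regularized superaccumulator exactly, since it merely re-splits the mantissa across exponents that are multiples of the radix, so every leaf holds an exact representation. By the preceding Lemma, each superaccumulator addition performed in the bottom-up sweep of Step~5 is carry-free and exact, with carries confined to an adjacent component; hence, by induction, the root of $T$ holds an exact sparse superaccumulator for $S_n$. Steps~6 and~7 then convert this redundant representation into a non-overlapping one (a single signed carry-propagation realized by parallel prefix) and round at the most significant nonzero component, which by definition yields a faithfully rounded $S_n^*$. For the resource bounds I would check each step in turn: the dominant cost is the cascading divide-and-conquer merge-sort of the exponents together with the prefix computations over lists whose total size is $O(n\log n)$ (Steps~3--5), each running in $O(\log n)$ EREW time and $O(n\log n)$ work, while the remaining steps use only $O(n)$ work. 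Invoking the work-time scheduling (Brent's) principle then turns $O(n\log n)$ work at depth $O(\log n)$ into an $O(\log n)$-time schedule on $n$ processors.

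For the lower bound on time, I would use an information-flow argument tailored to the EREW model: initially each input influences the contents of at most one memory cell, and because reads and writes are exclusive, the set of inputs that can affect any fixed cell at most doubles per step, so after $T$ steps a cell depends on at most $2^T$ inputs. To force $T\ge \log_2 n$, I would exhibit a worst-case instance on which $S_n^*$ genuinely depends on every input: take all summands equal to a common value $\delta$ chosen so that $n\delta$ is exactly representable as a float $f$ while $(n-1)\delta$ is uniquely faithfully rounded to a distinct float $f'\neq f$. Then an algorithm whose output is independent of some $x_i$ cannot distinguish this instance from the one obtained by replacing $x_i$ with $0$, yet the two demand different outputs; so the output cell must depend on all $n$ inputs, which yields the bound even though faithful rounding permits a two-valued answer in general.

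For the optimality of the processor count, which is where the algebraic computation tree model enters, I would prove an $\Omega(n\log n)$ work lower bound by reduction from a problem whose algebraic-computation-tree depth is known to be $\Omega(n\log n)$, such as element distinctness or sorting (Ben-Or). The idea is that faithful summation of numbers with widely separated exponents is at least as hard as sorting those exponents, so any algebraic computation tree solving it must perform $\Omega(n\log n)$ operations; combined with the $\Omega(\log n)$ depth bound, this forces $\Omega(n)$ processors, matching the algorithm. I expect this reduction to be the main obstacle: unlike the time bound, it must be carried out so that a summation oracle actually recovers ordering or collision information about the inputs within the algebraic model, and so that the slack inherent in faithful rounding cannot be exploited to collapse distinct instances into a common output.
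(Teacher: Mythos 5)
Your upper-bound verification is sound and matches what the paper intends (the paper simply asserts the upper bounds, since they follow from the step-by-step costs of the algorithm), and your EREW information-flow argument for the $\Omega(\log n)$ time bound is legitimate --- indeed more detailed than the paper's one-line remark that summation is a binary operator, and your choice of an instance of $n$ equal summands correctly handles the two-valued slack of faithful rounding.

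The genuine gap is the $\Omega(n\log n)$ lower bound in the algebraic computation tree model, which is exactly the part you defer and flag as ``the main obstacle'': you name a strategy (reduction from a Ben-Or-hard problem) but do not carry it out, and the source problems you propose --- sorting and element distinctness --- are ones for which the reduction would fail. A reduction must extract the hard problem's answer from a \emph{single} faithfully-rounded float, and that float is determined by only the most significant bits of the exact sum; a collision such as $2\cdot 2^{\tau a_i}$ among inputs with small exponents, or any ordering information, is invisible in those top bits, so distinct element-distinctness (or sorting) instances collapse to the same output. The only signal that reliably survives faithful rounding is ``sum is zero'' versus ``sum is nonzero,'' and the paper's proof is built around precisely this: it reduces \emph{set equality} (is $C=D$?) by encoding each $c_i\in C$ as the float $(-1)\cdot 2^{\tau c_i}$ and each $d_i\in D$ as $(+1)\cdot 2^{\tau d_i}$, where $\tau$ is the smallest power of two exceeding $\log n$. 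Because distinct exponents in this collection differ by at least $\tau>\log n$ while each exponent occurs with multiplicity at most $n<2^\tau$, no cancellation can occur across distinct exponent groups; hence the sum is zero if and only if every group cancels internally, i.e., if and only if $C=D$ as multisets. Since set equality has an $\Omega(n\log n)$ lower bound in the algebraic computation tree model (Ben-Or) and the encoding costs $O(1)$ time on $n$ processors, both the time and processor bounds of the algorithm are optimal. Without this (or an equivalent) zero-testing encoding, your proposal establishes the algorithmic half and the time lower bound, but not the optimality claim that constitutes the second sentence of the theorem.
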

\begin{proof}
We have already established the upper bounds.
For the lower bounds, we show that the set equality problem can be reduced
to the floating-point summation problem in $O(1)$ time using $n$ processors.
Suppose, then, that we are given two sets of $n$ positive numbers, $C$ and $D$,
and wish to determine if $C=D$.
Let $\tau$ be the smallest power of two greater than $\log n$.
For each element $c_i$ in $C$, create the floating-point number, 
$(-1,1,\tau c_i)$, which represents the number, $(-1)\times 2^{\tau c_i}$.
Likewise,
for each element $d_i$ in $D$, create the floating-point number, 
$(1,1,\tau d_i)$, which represents the number, $1\times 2^{\tau d_i}$.
We claim that $C=D$ if and only if the sum of these two sets 
of numbers is zero.
Clearly, if $C=D$, then there is a one-to-one matching between equal elements 
in $C$ and $D$; hence, there is a matching for 
each floating-point number, $(-1,1,\tau c_i)$, to
a floating-point number, $(1,1,\tau d_i)$, such that $c_i=d_i$.
Therefore, the sum of all of these numbers is zero in this case.
Suppose, for the ``only if'' case, that the sum of all these numbers is zero.
Note that
the exponent of any pair of floating-point numbers in our collection is 
either the same or differs by at least $\tau > \log n$.
Thus, if two such numbers are different, they will remain different
even if we multiply one of them by $n$.
Therefore, the only way that the sum of all these numbers is zero is if each
floating-point number, $(-1,1,\tau c_i)$, in this collection,
has a distinct associated 
floating-point number,
$(1,1,\tau d_i)$, in this collection, such 
that $c_i=d_i$.
Therefore, if the sum of all these numbers is zero, then $C=D$.
The lower-bounds follow, then, from the fact that summation
is a binary operator and the set equality problem
has an $\Omega(n\log n)$ lower bound in the algebraic computation
tree model~\cite{Ben-Or:1983}.
\end{proof}

To our knowledge, this is the first PRAM method that achieves $O(\log n)$
time and an amount of work that is worst-case optimal.
We note, however, that the above lower bound holds only for floating-point
representations where exponents are represented with $\Omega(\log n)$ bits.

\section{Our Condition-Number Sensitive PRAM Algorithm}

In our fast PRAM algorithm, we showed that it is possible to sum
two sparse superaccumulators in $O(1)$ time using $n$ processors
in the EREW PRAM model, given a sorted merge of the exponents of the
components of each superaccumulator.
This result also gives us the following.

\begin{lemma}
\label{lem:merge}
Given two truncated
$(\alpha,\beta)$-regularized sparse superaccumulators, $Y_1$ and $Y_2$,
of combined size $m$, we can
compute the sum of $Y_1$ and $Y_2$
in $O(\log m)$ time using $O(m/\log m)$ processors
in the EREW PRAM model.
\end{lemma}
\begin{proof}
The algorithm follows from the method described above combined
with an EREW PRAM method for merging two sorted lists.
(E.g., see~\cite{DBLP:books/aw/JaJa92} for details on such computations.)
\end{proof}

Given this tool, we now describe our 
condition-number sensitive parallel algorithm for summing
the floating-point numbers in $X$ in parallel.
Our method runs in polylogarithmic time using $O(n\log C(X))$ 
work.\footnote{Technically, we could have $C(X)=1$; 
  hence we could have $\log C(X)=0$. As a notational convenience, 
  therefore, we assume that the
  logarithms in our complexity bounds 
  are defined to always have a minimum value of $1$.}

We begin with a simplified version of our algorithm 
from the previous section.
Initialize an $O(\log n)$-depth summation tree, $T$, to have an element of
$X$ associated with each of its leaves.
Convert each value, $x_i$, stored in a leaf 
to an equivalent sparse $(\alpha,\beta)$-regularized
superaccumulator.
Perform a bottom-up parallel summation computation on $T$ using the method of
Lemma~\ref{lem:merge} to perform each pairwise summation of two sparse
superaccumulators.
We then complete the algorithm by a parallel prefix computation on the sparse
superaccumulator for the root of $T$ to propagate all the 
signed carry bits and we then convert this result
to a floating-point number, as in the last two steps
of our algorithm from the previous section.  
This simplified summation algorithm
runs in
$O(\log^2 n)$ time using $O(n\log n)$ work in the EREW PRAM model.

Given this template, our condition-number sensitive 
algorithm is as follows.
Begin by setting $r=2$.
Then perform the above bottom-up parallel 
merge-and-sum algorithm, but do so using
$r$-truncated sparse superaccumulators.
Unlike our previous method, this one may cause lossy errors to occur, due to
the restrictions to $r$-truncated sparse superaccumulators.
So we utilize a test, which we call the ``stopping condition,''
to determine if the result is correct and we can stop.
If the stopping condition is not satisfied, however, then we 
set $r\leftarrow r^2$ and repeat the computation.
We continue this iteration until either the stopping condition is satisfied
or we have increased $r$ so high that the final sparse superaccumulator is no
longer truncated.

Before we analyze this algorithm, let us first
provide the details for some sufficient stopping conditions.
Let $y$ denote the summation value computed from our method after a given 
iteration, where the final truncated
sparse superaccumulator is 
\[
Y=(y_{i_1},y_{i_2},\ldots,y_{i_r}),
\]
so that $y_{i_r}$ is its least-significant component.
Also, for the sake of simplicity, let us assume that $y$ is positive; the
method for the case when $y$ is negative is similar.
Let $E_{i_r}$ denote the exponent for $y_{i_r}$ 
and let
\[
\epsilon_{\rm min} = \epsilon \times 2^{E_{i_r}},
\]
where $\epsilon$ is the smallest mantissa that can possibly be represented 
in our chosen floating-point representation.
Note that $\epsilon_{\rm min}$
is the smallest value that could be represented by $y_{i_r}$.
A sufficient stopping condition, then, is to test whether or not
\[
y = y \oplus n\epsilon_{\rm min} = y \oplus -n\epsilon_{\rm min}.
\]
that is, the summation value, $y$, would be unchanged even after 
doing a floating-point addition or subtraction
of $n$ copies of a bound whose magnitude is larger than any value we truncated.

As a simplified alternative, which also works in our algorithms
as a sufficient stopping condition, is to
determine whether the exponent for the least
significant bit in $y$ is at least $\lceil \log n\rceil$ greater
than $E_{i_r}$.

The reason that these tests are sufficient as stopping
conditions is that when we are summing the $n$ floating point numbers using
truncated sparse superaccumulators, the largest magnitude
that we can possibility 
truncate any summand is strictly less than $\epsilon_{\rm min}$.
The reason for this is that,
by the definition of a truncated sparse superaccumulator,
if $y_{i_r}$ is included in our final truncated sparse superaccumulator,
then $y_{i_r}$ was not truncated by any partial sum.
Thus, the maximum value of the sum of all the truncated values is at most 
\[
n \epsilon_{\rm min} \le 2^{\lceil \log n\rceil} \epsilon_{\rm min}.
\]
Interestingly, this algorithm gives us the following condition-number
sensitive result.

\begin{theorem}
\label{thm:super-complex}
Given a set, $X$, of $n$ floating-point numbers, we can compute a
faithfully-rounded representation of the sum of the numbers in $X$ in
time
$O(\log^2 n\log\log\log C(X))$ 
using work that is $O(n \log C(X))$ in the EREW PRAM model,
where $C(X)$ is the condition number for $X$.
\end{theorem}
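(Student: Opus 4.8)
The plan is to analyze the iterative algorithm by separately bounding (i) the cost of a single iteration with truncation parameter $r$, (ii) the largest value $r^*$ of $r$ that the loop ever reaches before the stopping condition fires, and (iii) the number of iterations, and then to combine these using the fact that $r$ grows doubly exponentially.

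First I would bound a single iteration. Since an $r$-truncated sparse superaccumulator has at most $r$ active components, every pairwise sum performed at an internal node of $T$ combines two lists of total size $O(r)$, so by Lemma~\ref{lem:merge} it runs in $O(\log r)$ time and $O(r)$ work. Because truncation only shrinks the superaccumulators relative to the untruncated simplified algorithm, one full bottom-up sweep costs no more than that algorithm, namely $O(\log^2 n)$ time; and since $T$ has at most $n-1$ internal nodes, each contributing $O(r)$ work, one sweep costs $O(nr)$ work. The final carry-propagation and rounding steps are subsumed by these bounds.

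The hard part will be step (ii): relating $r^*$ to the condition number $C(X)$. Here I would argue that once the loop retains enough components to reach below the significant bits of the true sum, the stopping condition must hold. Concretely, the most significant retained component has exponent $\Theta(\log_2 \sum_i |x_i|)$, while a faithfully rounded answer only needs accuracy at the scale of the least significant bit of $S_n$, whose exponent is $\Theta(\log_2 |S_n| - t)$. The number of radix-$R$ components separating these two scales is $\Theta((\log_2(\sum_i|x_i|/|S_n|)+t)/\log R) = O(\log C(X))$. I would then show that once $r = \Omega(\log C(X))$, the least significant retained component $y_{i_r}$ sits at or below the unit in the last place of $y$, so that the total truncated mass, which the preceding discussion bounds by $n\epsilon_{\rm min}$, cannot change $y$; hence the stopping condition $y = y \oplus n\epsilon_{\rm min} = y \oplus -n\epsilon_{\rm min}$ is met and the returned value is faithfully rounded. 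This yields $r^* = O(\log C(X))$. The main subtlety to get right is that the top component tracks the largest partial-sum magnitude rather than $|S_n|$, which is exactly what introduces the $C(X)$ factor, and one must confirm that $r^* = O(\log C(X))$ never exceeds the full superaccumulator size $\sigma(n)$ (which holds because the full representation spans the entire exponent range, forcing $\log C(X) = O(\sigma(n))$).

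Finally I would count iterations and add up. Since $r$ is initialized to $2$ and squared each round, after $k$ rounds $r = 2^{2^k}$, so the loop halts once $2^{2^k} = \Theta(\log C(X))$, i.e. after $K = O(\log\log\log C(X))$ iterations. Summing the per-iteration time gives $K \cdot O(\log^2 n) = O(\log^2 n \log\log\log C(X))$. Summing the per-iteration work gives $\sum_{k=0}^{K} O(n\, 2^{2^k})$, a doubly exponential series dominated by its last term $O(n\, 2^{2^K}) = O(n \log C(X))$. These match the claimed bounds, completing the argument.
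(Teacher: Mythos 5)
Your proposal is correct and follows essentially the same route as the paper's own proof: bound each truncated bottom-up sweep by $O(\log^2 n)$ time and $O(nr)$ work, argue via the gap between the most-significant-bit position of $\sum_i |x_i|$ and that of $\left|\sum_i x_i\right|$ (plus the $\log n$ slack for the truncated mass $n\epsilon_{\rm min}$) that the stopping condition fires once $r = \Omega(\log C(X))$, and then use the doubly-exponential growth of $r$ to get $O(\log\log\log C(X))$ iterations with the work sum dominated by its last term. You even share the paper's one mild gloss---since $r$ is squared, the final $r$ can overshoot the $\Theta(\log C(X))$ threshold by a square, a point neither write-up dwells on---so nothing is lost relative to the paper's own argument.
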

\begin{proof}
The correctness for our algorithm follows immediately from
the above discussion, since we terminate when we are assured of a
faithfully-rounded sum for the numbers in $X$.
We claim that the number of iterations (each of which involves squaring the
truncation parameter, $r$) is $O(\log C(X))$.
To see that this claim is true, note that 
\[
\log C(X) = \log \left(\sum_{i=1}^n |x_i|\right)\ -\ 
            \log \left(\left|\sum_{i=1}^n x_i \right|\right).
\]
Thus, if we represented the values, 
$X_1=\sum |x_i|$ 
and $X_2=|\sum x_i|$, using exact 
fixed-point representations, then
$\log C(X)$ is proportional to
the difference, $\delta$, between the bit-position of the 
most significant $1$-bit in the representation of $X_1$
and the bit-position of the 
most significant $1$-bit in the representation of $X_2$.
Our algorithm must therefore perform a sufficient number of iterations
so that the number of bits in our truncated sparse superaccumulator
for the sum is at least $\delta$.
This indeed occurs in our algorithm and,
furthermore, note that our algorithm will terminate when the number of bits
in our truncated sparse superaccumulator for the sum is $\Theta(\delta+\log n)$.
That is, our algorithm
terminates when $r$ is $O(\log C(X))$, since we assume that 
floating-point numbers in our representation contain 
$\Omega(\log n)$ bits.
Since we square $r$ in each iteration, this implies the claimed 
running-time and work bounds for our parallel algorithm, since we require
$O(\log\log\log C(X)))$ squarings to get $r$ to be large enough and
the total work involved is a geometric summation that adds up to 
$O(n\log C(X))$.
\end{proof}

\ifFull
Thus, for the vast majority of inputs, which have 
constant-bounded condition numbers, our algorithm uses a linear amount of
work and runs in $O(\log^2 n)$ parallel time.
That is, implemented as a sequential algorithm, our method would match the
linear running time of the inherently sequential method of adding $n$
numbers, one at a time to a superaccumulator.
\fi

\section{External-Memory Algorithms}
In this section, we describe our efficient algorithms
for summing $n$ floating-point numbers in the 
external-memory model~\cite{Vitter:2008}.

Suppose we are given a set $X$ of $n$ floating-point numbers.
Our sorting-based external-memory algorithm is as follows.
\begin{enumerate}
\item
Convert each floating-point number to an 
$(\alpha,\beta)$-regularized sparse superaccumulator. This can be
done with a single scan over the input, using 
$O({\rm scan}(n))$ I/Os.
\item
Sort the components of all the sparse superaccumulators
constructed in the previous step
independently by their exponents.
This step clearly takes $O({\rm sort}(n))$ I/Os.
\item
Scan the sorted list of superaccumulator components, while
maintaining a $(\alpha,\beta)$-regularized sparse superaccumulator, $S$, to
hold the sum.  
With each component, $y_{i,j}$, we add $y_{i,j}$ to $S$, using a
localized version of the algorithm for summing two superaccumulators.
Note that we do not need to store all of $S$ in internal memory to
implement this step, however,
since we are processing the components in order by their exponents.
Instead, we just need to keep a hot-swap buffer of $S$ that includes the
current exponent, swapping out blocks to external memory as they become full. 
Moreover, since summing two $(\alpha,\beta)$-regularized
superaccumulators is a carry-free operation, we don't need to worry
about extra I/Os 
that would have otherwise been caused by carry-bit propagation.
Thus, we can implement this step in $O({\rm scan}(n))$ I/Os.
\item
Given the computed superaccumulator, $S$, which now holds the exact
sum of the $n$ floating-point numbers, perform a
back-to-front scan of $S$ to propagate signed carry bits, to convert
$S$ into a non-overlapping $((R/2)-1,(R/2)-1)$-regularized sparse
superaccumulator.
This step clearly requires $O({\rm scan}(n))$ I/Os.
\item
Given a non-overlapping superaccumulator for the sum, we round the
most significant non-zero components to produce a correctly rounded
floating-point number for the sum of the $n$ floating-point numbers
in $X$.
\end{enumerate}

This gives us the following.

\begin{theorem}
Given $n$ floating-point numbers, we can compute a correctly-rounded
floating-point representation of their sum using $O({\rm sort}(n))$
I/Os in the external-memory model, in a cache-oblivious manner.
\end{theorem}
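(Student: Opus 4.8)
The plan is to confirm the two assertions of the theorem—exactness of the faithfully-rounded output and the $O({\rm sort}(n))$ I/O bound—by accounting for each of the five steps above, and then to check that none of them requires knowledge of the cache parameters $M$ and $B$. Correctness is largely inherited from the machinery already developed: Steps~1 and~2 only reorganize the summands, so they preserve the exact value $S_n$, while the crux is Step~3. Since every superaccumulator built in Step~1 is $(\alpha,\beta)$-regularized, the carry-freeness lemma proved earlier guarantees that folding a component into the running accumulator $S$ propagates a carry to at most the adjacent higher-exponent position; consequently the accumulated result represents $S_n$ exactly. Step~4's back-to-front signed-carry prefix pass then yields a non-overlapping superaccumulator, whose leading components Step~5 rounds faithfully, precisely as in the PRAM algorithm of Section~\ref{sec:pram}.

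For the I/O count I would tally the steps directly: Step~1 is one scan, Step~2 is one sort, Steps~3 and~4 are scans, and Step~5 touches only the $O(1)$ top blocks of $S$. The total is therefore $O({\rm scan}(n)) + O({\rm sort}(n)) = O({\rm sort}(n))$, using ${\rm sort}(n) = \Omega({\rm scan}(n))$. The delicate point, which I expect to be the main obstacle, is justifying that Step~3 genuinely costs $O({\rm scan}(n))$ rather than incurring scattered accesses into $S$ as it grows to size $\sigma(n)$. The key is that the sorted component stream is monotone in the exponent and that carry-freeness confines each update to the current exponent and the one just above it. Hence $S$ is read and written in increasing exponent order, in lockstep with the stream, and at any instant only an $O(1)$-block hot-swap window spanning the active exponents resides in internal memory; a block of $S$ is finalized and flushed as soon as the stream advances past its exponent range. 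So each block of the input and of $S$ is read and written $O(1)$ times, giving the scan bound.

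Finally, for cache-obliviousness I would observe that a linear scan is cache-oblivious by definition, so Steps~1, 3, 4, and~5 carry over unchanged, and that Step~2 can be performed with a known cache-oblivious sorting algorithm (e.g., funnelsort), which attains $O({\rm sort}(n))$ I/Os without reference to $M$ or $B$. The accumulation of Step~3 reads its input and updates $S$ in a single monotone pass, so it is likewise oblivious to the block size; the sole requirement is that its constant-block hot-swap window fit in internal memory, which holds whenever $M = \Omega(B)$. Combining these observations yields the stated $O({\rm sort}(n))$-I/O, cache-oblivious bound.
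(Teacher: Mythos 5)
Your proposal is correct and follows essentially the same route as the paper: the paper's proof likewise tallies the I/Os step by step (scan, sort, scan, scan, rounding), relies on the carry-free property of $(\alpha,\beta)$-regularized superaccumulators together with the exponent-sorted order to justify that the accumulation step is a single scan with an $O(1)$-block sliding window, and obtains cache-obliviousness by observing that all steps are scans except the sort, which is done with a cache-oblivious sorting algorithm. Your write-up simply makes explicit the details that the paper places in the algorithm description and the discussion following the theorem.
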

\begin{proof}
We have already established the performance bounds. To establish that
this algorithm can be performed in a cache-oblivious manner, we note
that every step involves simple scans over lists of numbers, except
for a sorting step, which itself can be 
done cache obliviously~\cite{f814600}.
\end{proof}

The critical step in the above external-memory algorithm, of course, is the
scan to add each floating-point component to our superaccumulator.
Since these components were sorted by their exponents and our superaccumulator
representation is carry-free, we need only keep a ``sliding window'' of 
$O(1)$ blocks of our superaccumulator in memory as we perform this scan.
Moreover, such a scan is cache-oblivious given any reasonable page
eviction strategy.

If the size of our superaccumulator, $\sigma(n)$, is less than the size
of internal memory, $M$, however, then we can utilize an even simpler 
algorithm.
Namely, we can simply keep our entire superaccumulator stored in internal
memory and process the floating-point components in any order to add
each one to the superaccumulator. This observation leads to the following,
then.

\begin{theorem}
Given $n$ floating-point numbers, we can compute a correctly-rounded
floating-point representation of their sum using $O({\rm scan}(n))$
I/Os in the external-memory model, 
in a cache-oblivious manner, if $\sigma(n)\le M$.
\end{theorem}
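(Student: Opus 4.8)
The plan is to verify, for the in-memory accumulation algorithm described immediately above the statement, the three things the theorem asserts: the $O({\rm scan}(n))$ I/O bound, cache-obliviousness, and \emph{correct} (not merely faithful) rounding. The algorithm initializes an $(\alpha,\beta)$-regularized superaccumulator $S$ of size $\sigma(n)$ to zero and, because $\sigma(n)\le M$, keeps all of $S$ resident in internal memory throughout. It then makes a single linear pass over $X$; for each floating-point number read it converts that number to its $O(1)$ superaccumulator components and adds them to $S$ using the localized, carry-free addition justified by the earlier Lemma. After the pass it propagates signed carries to obtain a non-overlapping $((R/2)-1,(R/2)-1)$-regularized superaccumulator and rounds off the leading components.

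First I would establish the I/O bound. The sole access to external memory is the one pass that reads the $n$ inputs, costing $O({\rm scan}(n))=O(n/B)$ I/Os. Every superaccumulator operation — each component-wise add, each signed carry, the final carry-propagation pass, and the rounding step — touches only $S$, which is memory-resident since $\sigma(n)\le M$ and since the active-index set grows monotonically up to $\sigma(n)$, so $S$ never exceeds $M$ during the computation. Consequently these operations incur no I/O at all; in particular, because all of $S$ is resident, even if one localized addition sends a carry between components no block need be swapped out, so the carry-free property is not even required to bound the I/Os here (it is needed only to keep the in-memory work small). Emitting the $O(\sigma(n))$-word result is absorbed into $O({\rm scan}(n))$.

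Next I would argue correctness and, crucially, correct rounding. Converting each $x_i$ to a superaccumulator is exact, and by the Lemma every superaccumulator addition is exact, introducing no roundoff but only signed carries that preserve the represented value; hence after the pass $S$ equals $\sum_i x_i$ exactly and in any order of processing. The distinction from the faithfully-rounded PRAM theorems is precisely what the hypothesis $\sigma(n)\le M$ buys: the entire exact sum is available in internal memory, so to round to nearest I read off the most significant nonzero components to fix the leading mantissa bits and the rounding position, and I determine the sticky bit by testing whether any lower-order component of $S$ is nonzero. Since every component is resident, this test is exact, so ties can be resolved and the nearest representable floating-point number $S_n^*$ selected, giving a correctly-rounded result.

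Finally, cache-obliviousness holds because no step references $M$ or $B$: the input is touched once, in order, and such a sequential scan attains $O(n/B)$ I/Os under any reasonable page-replacement policy with no knowledge of $M$ or $B$, while all remaining work is confined to $S$ in internal memory. The main obstacle is the correct-rounding guarantee rather than the I/O accounting: I must be careful to argue that having the exact sum resident genuinely lets me compute the sticky bit by inspecting the lower-order components, since this exact tail test — impossible once $S$ spills to external memory — is exactly what upgrades faithful rounding to correct rounding in this regime.
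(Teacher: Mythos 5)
Your proposal is correct and takes essentially the same route as the paper, whose entire justification is the observation immediately preceding the theorem: since $\sigma(n)\le M$, the whole superaccumulator stays resident in internal memory, so the only I/O is the single scan over the input, and the carry propagation and final rounding are performed in memory at no I/O cost. One minor aside: your closing claim that the exact sticky-bit test is ``impossible once $S$ spills to external memory'' overstates the point --- the sorting-based algorithm also holds the exact sum and could perform the same test within $O({\rm scan}(n))$ extra I/Os --- but that remark is not load-bearing for your argument.
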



\section{Simple MapReduce Algorithms}
\label{sec:mapreduce}

\begin{figure*}[t]
	\centering
		\includegraphics[width=1.00\textwidth]{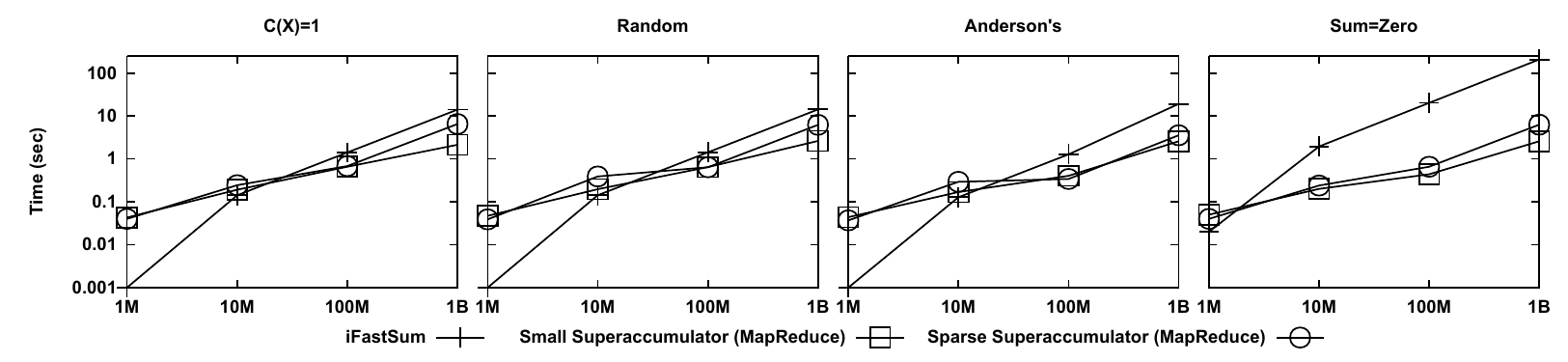}
	\caption{Total running time as the input size increases from 1 million to 1 billion numbers}
	\label{fig:InputSize}
\end{figure*}

In this section, we present a simple 
MapReduce~\cite{Dean:2008,mapsort,Karloff:2010} algorithm for summing $n$
floating-point numbers.
We also report on an experimental evaluation, where
we assume that the input is already loaded in a Hadoop Distributed File System
(HDFS) where the input is partitioned into 128~MB blocks which are stored on the
local disks of cluster nodes.
Our algorithm is based on a single-round of MapReduce which takes as
input a collection of floating point numbers, and produces a single floating
point number that represents the correctly-rounded sum of all input numbers.
In this section, we first give a high level overview of our MapReduce algorithm.
Then, we describe an implementation of that algorithm using Spark~\cite{ZCF+10}.
Finally, we give experimental results of our implementation using large scale data.

\subsection{Overview}
Our MapReduce algorithm runs as a single MapReduce job as detailed below.

\begin{itemize}
	\item Map: for each $x_i$, map $x_i$ to $(r(x_i),x_i)$, where 
	$r(x_i)$ returns an integer in the range $[1,p]$ that represents one of the
 available $p$ reducers. We can simply use a random function $r$, which assigns
 each input record to a randomly chosen reducer. The function $r$ can also be
 defined based on domain knowledge of $X$ with the goal of balancing the load
 across the $p$ reducers.
 For example, if $p$ is $o(n)$, then this function assigns roughly $O(n/p)$
 values for each reducer.

 \item Reduce: In the reduce phase, each reducer $r_i$, $i\in[1,p]$, sums up all
 the assigned numbers using the sequential algorithm described earlier in Section~\ref{sec:pram}.
 The output of the reduce function is one sparse superaccumulator that represents
 the exact sum of the floating-point numbers assigned to $r_i$.
 After that, each reducer writes the resulting sparse superaccumulator to the
 output.
 
 \item Post-process: In this final step, a single machine reads back the $p$
 sparse superaccumulators produced by the $p$ reducers and performs a final
 step of the sparse superaccumulator addition algorithm to add all of them
 into one final sparse superaccumulator. Finally, the resulting sparse superaccumulator
 is converted to a correctly-rounded floating point value which is written
 to the output as the final answer.
\end{itemize}



\subsection{Implementation}
We implemented the MapReduce algorithm described above using
Spark~\cite{ZCF+10}, a modern distributed processing framework.
This implementation is open source and available at https://github.com/aseldawy/sumn.
We begin by loading the input from disk into the distributed memory
of the cluster nodes. In this step, each machine loads the HDFS
blocks that are physically stored on its local disk. Then, each
machine applies a {\em combine} function on each block, which uses
the sequential algorithm described in Section~\ref{sec:pram} to sum
all numbers in each partition into one sparse superaccumulator. The
goal of the combine step is to reduce the size of the data that
need to be shuffled between mappers and reducers. The output of the
combine function is a single key-value pair where the key is a
random integer in the range $[1, p]$, and the value is the sparse
superaccumulator.  Then, the shuffle phase groups key-value pairs
by the reducer number and assigns each group to one reducer.  After
that, each reducer sums up all the assigned sparse superaccumulators
into one sparse superaccumulator and writes it to the intermediate
output. Finally, the postprocess phase runs on the driver machine
that issued the MapReduce job and it combines all sparse superaccumulators
into one and writes the correctly-rounded floating point value as
the final result.

\subsection{Experimental Results}

\begin{figure*}[t]
	\centering
		\includegraphics[width=1.00\textwidth]{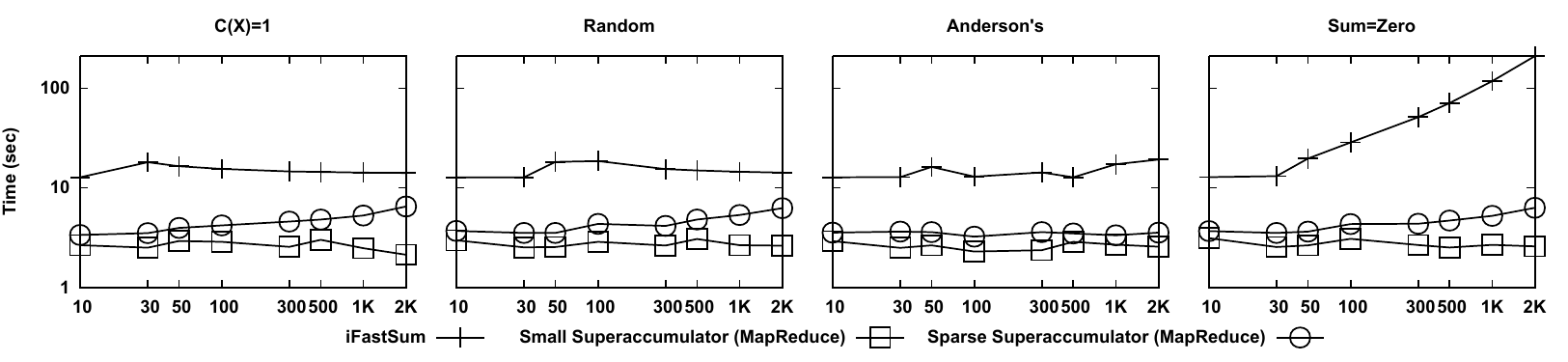}
	\caption{Total running time as the parameter $\delta$ increases from 10 to 2000}
	\label{fig:Delta}
\end{figure*}

\begin{figure*}[t]
	\centering
		\includegraphics[width=1.00\textwidth]{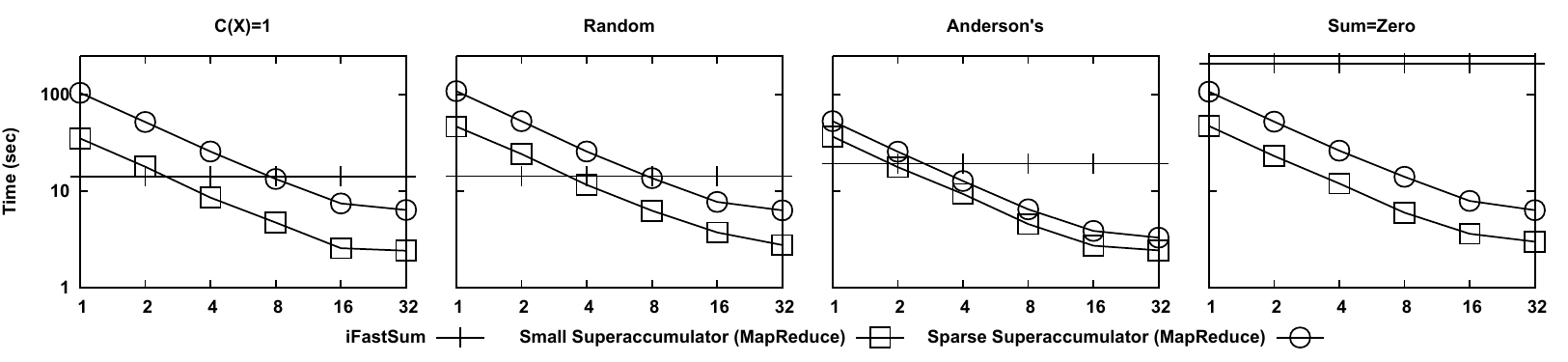}
	\caption{Total running time as the cluster size increases from 1 to 32 cores}
	\label{fig:ClusterSize}
\end{figure*}

To test our MapReduce implementations, we carried out an experimental
evaluation of the proposed MapReduce algorithm on large datasets
of up to one~billion numbers. The input datasets were randomly
generated using four different distributions as described
in~\cite{doi:10.1137/070710020}:
\begin{enumerate}
\item  A set of randomly generated
{\em positive} numbers which results in a condition number $C(X)=1$.
\item
A mix of positive and negative numbers generated uniformly at
random. 
\item
Anderson's-ill conditioned data where a set of random
numbers are generated, and then their arithmetic mean is subtracted
from each number. 
\item
A set of numbers with real sum equal to zero.
\end{enumerate}
In the four datasets, the parameter $\delta$ defines an upper bound
for the range of exponents of input numbers.

We used Apache Spark~1.6.0 running on Ubuntu 14.04 and OpenJDK
1.7.0\_91. 
For experimental repeatability, we ran all our experiments
on Amazon EC2 using an instance of type `r3.8xlarge' which has 32
cores and 244~GB of main memory running on Intel Xeon E5-2670 v2
processors~\cite{xeonE5-2670}. We used the 32~cores to run a local
cluster of 32~worker nodes inside that machine. We ran the experiments
using three algorithms and measured the end-to-end running time of
each one:
\begin{enumerate}
\item Our proposed MapReduce algorithm that uses sparse
superaccumulators. 
\item A variation of our MapReduce algorithm that
uses small superaccumulators~\cite{Neal15a} instead of sparse
superaccumulators. 
\item The state-of-the-art sequential iFastSum
algorithm~\cite{Zhu:2010:A9O}. 
\end{enumerate}
For the latter algorithm,
we used the original C++ implementation
provided by Zhu and Hayes~\cite{Zhu:2010:A9O}, compiled using gcc
4.8.4. For all the techniques, we first generated a dataset using
the random generator provided in~\cite{Zhu:2010:A9O} and stored it
to disk. Then, we processed the same generated dataset with each
algorithm one after another. We ignored the disk I/O time and focused
only on the processing time. If we take the disk I/O into account,
all MapReduce algorithms will be much faster due to the distributed
nature of Hadoop Distributed File System (HDFS) where the machines
load data in parallel from multiple disks.

Figure~\ref{fig:InputSize} shows the overall running time of the
three algorithms as the input size increases from 1~million to
1~billion numbers, while the value of $\delta$ is fixed at 2000.
In general, iFastSum is faster for processing small datasets with
less than 10~million records. However, as the size of the data
increases, both MapReduce implementations outperform the sequential
iFastSum algorithm with up to 80x speedup. This shows a great
potential for MapReduce algorithms in the problem of summing a huge
array of floating point numbers. We observe that the implementation
that uses small superaccumulator is slightly faster than the one
that uses sparse superaccumulator. The reason for this is that each
sparse superaccumulator runs on a single core, which does not allow
it to realize the theoretical limit of doing $O(p)$ work in $O(1)$
time using $p$ processors. In the future, we could investigate the
use of single-instruction multiple-data (SIMD) features to achieve
a higher level of parallelism. Another possibility is to use GPU
processing units which can achieve massive parallelization with
thousands of threads on a single chip. We believe that there is a
huge potential in these two options as the design of sparse
superaccumulator lends itself to these two types of parallelization
where the same instruction is repeated for every index in the
superaccumulator.

Figure~\ref{fig:Delta} shows the running time when the parameter
$\delta$ increases from 10 to 2000. Notice that the maximum possible
value for $\delta$ is 2046 for double-precision floating point
numbers. As the input size is 1 billion, we observe that the two
MapReduce algorithms consistently outperform the sequential iFastSum
algorithm. We also observe that the running time of the sparse
superaccumulator algorithm slightly increases as the value of
$\delta$ increases. This behavior is expected, because the increased
value of $\delta$ makes the superaccumulator less sparse as the
number of non-zero (active) indices increases. The only exception
is with dataset No.~3, as the subtraction of the mean causes the
range of exponents to decrease to about 15 even if the original
$\delta$ is very large.  Similar to the behavior in its original
paper~\cite{doi:10.1137/070710020}, the running time of iFastSum
with dataset No.~4 increases with the value of $\delta$. Interestingly,
the small superaccumulator keeps a consistent performance regardless
of the value of $\delta$.

Figure~\ref{fig:ClusterSize} shows the end-to-end running time as
the cluster size increases from 1 to 32~cores. The performance of
iFastSum stays constant as it runs only on a single core. The results
of this experiment shows a perfect scalability for our MapReduce
algorithm, where the performance scales linearly with number of
machines. The performance starts to saturate as the cluster size
increases from 16 to 32~cores because the underlying processor 
{\em virtually} increases the number of running threads using the
hyper-threading feature. For a small cluster with a few cores,
iFastSum is faster, as it is highly tuned for sequential processing
while Spark incurs extra overhead for the other algorithms. As the
number of cores increases, the parallelization of MapReduce algorithms
allows them to outperform iFastSum in all experimented datasets.
However, the crossover point changes from one dataset to another.
For example, since dataset No.~4 is the worst case for iFastSum,
sparse superaccumulator proves to be faster even on a single core.

\section{Conclusion}
In this paper, we have given a number of efficient parallel algorithms
for computing a faithfully rounded floating-point representation of the
sum of $n$ floating-point numbers.  Our algorithms are designed for
a number of parallel models, including the PRAM, external-memory,
and MapReduce models. The primary design paradigm of our methods
is that of
converting the input values to an intermediate representation, called
a sparse superaccumulator, summing the values exactly in this representation,
and then converting this exact sum to a faithfully-rounded floating-point
representation.
\ifFull
We are able to achieve significant parallelization by utilizing a novel
intermediate floating-point superaccumulator representation that is carry-free.
\fi
Our experimental evaluation shows that our MapReduce algorithm can achieve
up to 80X performance speedup as compared to the state-of-the-art sequential algorithm.
The MapReduce algorithm yields lineary scalability with both the input dataset and
number of cores in the cluster.

\subsection*{Acknowledgments}
This research was supported in part by
the National Science Foundation under grant 1228639,
and an AWS in Education Grant.
We would like to thank Wayne Hayes for several helpful discussions
concerning the topics of this paper.

{\raggedright 
\small
\bibliographystyle{abbrv} 
\bibliography{refs} 
}

\end{document}